\declaretheorem{theorem}
\theoremstyle{definition}
\newcommand{\updownarrows}{\uparrow\!\downarrow}
\renewcommand{\paragraph}[1]{\addcontentsline{toc}{section}{#1}\emph{#1.}---}
\begin{document}

\title{Frame dependence of Spekkens’ contextuality for relativistic spin systems}

\author{Ruben Campos Delgado}
\email{ruben.camposdelgado@itp.uni-hannover.de}
\affiliation{Institut für Theoretische Physik, Leibniz Universität Hannover, 30167 Hannover, Germany}

\begin{abstract}
We show that the operational definition of contextuality introduced by Spekkens is, in general, not Lorentz invariant.  Specifically,  we consider an explicit example with particle states consisting of both spin and momentum, we apply a Lorentz transformation to obtain the states in a new inertial frame, and then trace out the momentum degrees of freedom in both frames. We find that, while an observer in the first inertial frame describes a contextual ontological model with respect to spin states and all possible spin measurements, an observer in the boosted frame describes a non-contextual model with respect to the transformed spin states and all transformed spin measurements. Hence, the Spekkens' notion of contextuality, when restricted to spin degress of freedom only, is a frame-dependent concept. We apply our results to predict a novel relativistic effect concerning the task of discriminating between two quantum states. We show that the probability of success for a moving observer exceeds that of an observer at rest. 
\end{abstract}

\maketitle

\paragraph{Introduction}
A fundamental property of quantum mechanics, which distinguishes it from classical physics, is that it is contextual: any assignment of a value to the outcome of the measurement of a physical property must depend on the measurement context, namely on what other properties are simultaneously measured with it. 
In other words, measurements in quantum theory cannot be considered as revealing pre-existing values of a classical hidden variable.  This is the content of the Kochen-Specker theorem \cite{Specker1960, Kochen:1967equ}. The importance of contextuality goes far beyond quantum foundations, as it is also widely considered a resource for quantum computation \cite{Raussendorf:2013, Howard:2014zwm, Bermejo-Vega:2017}, state discrimination \cite{Schmid:2018} and security of quantum key distribution protocols \cite{Horodecki:2010, Singh:2017}.

The original notion of Kochen-Specker contextuality was extended in 2005 by Spekkens \cite{Spekkens:2005} to include unsharp measurements and arbitrary operational theories.  Let $\{P\}$ be the convex set of preparations and $\{M\}$ the convex set of measurements, with outcomes labelled by $k$. In the Spekkens' formulation, the probabilities are reproduced according to the following ontological model:
\begin{equation}\label{eq:basic_def}
    p(k|P,M) =\sum_{\lambda} \mu_P(\lambda)\zeta_{M,k}(\lambda).
\end{equation}
Here, $\mu_P(\lambda)$ are probabilities with $\sum_{\lambda}\mu_P(\lambda)=1$ for all $P$ and $\zeta_{M,k}(\lambda)$ are so-called indicator functions with $\sum_{k}\zeta_{M,k}(\lambda)=1$ for all $M$ and $\lambda$. If both $\mu_P$ and $\zeta_{M,k}$  depend only on the operational equivalence classes, i.e. classes of preparations (respectively measurements) that are not distinguishable by any measurements (respectively states), then the operational theory is said to be non-contextual. In quantum theory, preparation equivalence classes are quantum states $\{\rho_i\}_i$, namely positive semi-definite unit-trace linear operator acting on a finite-dimensional Hilbert space. Equivalence classes of measurements are described by positive operator valued measures (POVM), namely collections of positive semi-definite linear operators $M=\{ M_k\}_k$ which sum to the identity $\sum_k M_k=I$. Furthermore, in quantum theory the probabilities are obtained via the Born rule, $p(k|\rho_i, M)=\Tr(\rho_i M_k)$. Thus, Eq. \eqref{eq:basic_def} translates to \cite{Jokinen:2024qtk}
\begin{equation}\label{eq:Spekkens_def}
   \Tr(\rho_i M_k) = \sum_{\lambda} \Tr(\rho_i G_{\lambda}) \Tr(\sigma_{\lambda} M_k) \hspace{3mm} \forall i, \,\forall M_k, 
\end{equation}
where $\Tr(\rho_i G_{\lambda})\geq 0$, $\Tr(\sigma_{\lambda} M_k)\geq 0$, and $\Tr(\sigma_\lambda) = 1$ for all $\lambda$ and $\sum_{\lambda} \Tr\left(\rho_i G_{\lambda}\right) = 1$ for all $i$. Here, $\{\sigma_{\lambda}\}_\lambda$ are quantum states and $\{G_\lambda\}_{\lambda}$ is a pseudo-POVM, i.e. a POVM that is not necessarily positive semi-definite. 
If a set of states $\{\rho_i\}_i$ satisfies Eq. \eqref{eq:Spekkens_def} for all measurements $M_k$, then we say that the states are non-contextual. Otherwise, they are contextual. 

The first main result of this paper is that the Spekkens' definition is in general not Lorentz invariant when restricted to spin degrees of freedom only. To prove this, we fix an inertial frame and consider an explicit example of particle states with both spin and momentum.  We then apply a Lorentz transformation, in particular a Lorentz boost, and trace out the momentum degrees of freedom in both inertial frames. The ontological model consisting of the chosen spin states and all spin measurements is contextual in the initial frame. However, the model consisting of the transformed spin states and all transformed spin measurements is non-contextual in the boosted frame. Moreover, this is not just a relativistic effect. In fact, the result holds even at low speeds. 
Therefore, the momentum and spin degrees of freedom of massive particles are inherently inseparable, leading to frame-dependent contextuality when analyzed with respect to spin
observables alone. Nevertheless, if one makes the additional assumption that the wavefunctions in momentum space are spherically symmetric, then Lorentz invariance is preserved for one-qubit states. However, this does not hold for states with two or more qubits.

The observed breaking of Lorentz invariance of the Spekkens' definition should not be completely unexpected. In fact, when special relativity is taken into account, it is already known that different observers may disagree on certain physical properties, which are well defined for quantum mechanics alone. For example, thermal radiation that is perfectly black-body in an inertial frame is not thermal if viewed from a moving frame \cite{Peebles:1968zz, Landsberg:1996np}; product states in one frame may appear entangled in another frame \cite{Gingrich:2002ota, Peres:2002ip} with the consequence that the entanglement entropy is not Lorentz invariant \cite{Peres:2002ip}. 

Our result also has consequences for the problem of discriminating between quantum states. First of all, we argue that, in general, contextuality is not necessarily a requirement for the security of quantum key distribution protocols. 
Specifically, we study an idealized version of the BB84 protocol \cite{BennettBrassard2014}. Alice has four pure states involving both spin and momentum. She ignores the momentum degrees of freedom and sends to Bob linearly dependent and contextual spin states. In this first scenario, the protocol is known to be robust against the attacks of an eavesdropper Eve trying to guess the state sent by Alice. In the second scenario, Eve performs a Lorentz boost and observes the transformed states, which are now linearly independent and non-contextual. Eve hopes to use non-contextuality to have more success in guessing the state. However, we explicitly show that her situation now gets worse: the states become so mixed that her probability of success decreases as her speed increases. 
One can then turn to a different task, namely discriminating between only two states, for example two different ensembles of the states sent by Alice.  This time, performing a boost yields an advantage as the probability of success increases. 

The paper is structured as follows:  we first review the proper formalism used to describe the transformation of qubits under a Lorentz transformation. Then we construct an explicit example with one-qubit states where an initial contextual model is transformed to a non-contextual one and prove a theorem regarding the preservation of Lorentz invariance for one-qubit spherically symmetric wavefunctions in momentum space.  In the final part, we use the states constructed in the example to analyze a simple version of the BB84 protocol and the general problem of discriminating between two quantum states. 

\paragraph{Relativistic qubits}
We consider a particle of mass $m$ and four-momentum $p^{\mu}=(E,\textbf{p})$, where $E=\sqrt{m^2+|\textbf{p}|^2}$.  We denote the generic quantum state of the particle as $\ket{ \sigma, p}$, where $\sigma$ is the projection of the spin along the $z$-axis. Under a Lorentz transformation $\Lambda$, the momentum changes in the standard way $p^{\mu}\to\Lambda^{\mu}_{\,\,\nu} p^{\nu}$, whereas the spin undergoes a Wigner rotation \cite{Wigner:1939cj, Weinberg:1995mt}. The transformation of the quantum state, implemented by a unitary operator $U(\Lambda)$, is then
\begin{equation}\label{eq:relativistic_rule}
U(\Lambda)\ket{\sigma,p}=\sum_{\sigma'}D^{(s)}_{\sigma'\sigma}\left[W(\Lambda,p)\right]\ket{\sigma',\Lambda p},
\end{equation}
where $D^{(s)}_{\sigma'\sigma}$ furnishes a representation of the little group for spin $s$ and
\begin{equation}
    W(\Lambda,p):=L^{-1}(\Lambda p) \Lambda L(p),
\end{equation}
known as Wigner rotation, contains a "standard boost" $L(p)$ which maps the four-momentum of a particle at rest $(m,0)$ to a generic $(E,\textbf{p})$. 
From now on, we focus on qubits and set $s=1/2$. The Lorentz transformations we are interested in are boosts and, without loss of generality, we restrict ourselves to a boost with rapidity $\zeta$ in the $z$ direction. After expressing the momentum in spherical coordinates as $p^{\mu} =\left(E,p\sin\theta\cos\varphi, p\sin\theta\sin\varphi,p\cos\theta\right)$,
the matrix form of $D^{(\frac{1}{2})}_{\sigma'\sigma}$ is \cite{Gingrich:2002ota, Halpern1968} 
\begin{equation}\label{eq:wigner_matrix}
D^{(\frac{1}{2})}_{\sigma'\sigma}=\begin{pmatrix} D_{\uparrow\uparrow} && D_{\uparrow\downarrow}\\ D_{\downarrow\uparrow} && D_{\downarrow\downarrow}
\end{pmatrix}=
\begin{pmatrix} \alpha && \beta e^{-i\varphi} \\ -\beta e^{i\varphi} && \alpha\end{pmatrix},
\end{equation}
where
\begin{equation}
    \alpha= \sqrt{\frac{E+m}{E'+m}}\left[\cosh\left(\frac{\zeta}{2}\right)+\frac{p\cos\theta}{E+m}\sinh\left(\frac{\zeta}{2}\right)\right],
\end{equation}
\begin{equation}
    \beta = \frac{p\sin\theta}{\sqrt{(E+m)(E'+m)}}\sinh\left(\frac{\zeta}{2}\right)
\end{equation}
and $E'=E\cosh(\zeta)+p\cos\theta\sinh(\zeta)$. One can verify that $\alpha^2+\beta^2=1$, which ensures that $D^{(\frac{1}{2})}_{\sigma'\sigma}$ is unitary. 

\paragraph{Spekkens contextuality for one-qubit states}
Before looking at a specific example, we state two useful results \cite{Zhang:2025kme, CamposDelgado:2025lyl}. The first connects linear independence of states with non-contextuality.
\begin{theorem} \label{th1}
Let $\mathcal{H}$ be a Hilbert space and let $\mathcal{D}(\mathcal{H})$ be the space of states in $\mathcal{H}$. If the states  $\{\rho_i\}_{i} \subseteq \mathcal{D}(\mathcal{H}) $ are linearly independent, then they are Spekkens non-contextual. 
\end{theorem}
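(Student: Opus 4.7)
The plan is to exhibit an explicit Spekkens ontological model of the form in Eq.~\eqref{eq:Spekkens_def} directly from the linear independence hypothesis, using a ``deterministic measure-and-prepare'' construction in which the hidden variable $\lambda$ labels the states themselves. Concretely, I would let $\lambda$ run over $\{1,\dots,n\}$ (plus one extra index discussed below), set $\sigma_\lambda=\rho_\lambda$, and seek pseudo-POVM elements $G_\lambda$ satisfying $\Tr(\rho_i G_j)=\delta_{ij}$. Under this ansatz, Eq.~\eqref{eq:Spekkens_def} collapses to the identity $\Tr(\rho_i M_k)=\sum_j \delta_{ij}\Tr(\rho_j M_k)$, which holds trivially.

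The existence of such $G_j$ is exactly where linear independence enters. The space of Hermitian operators on $\mathcal{H}$ is a finite-dimensional real vector space, and $\langle A,B\rangle:=\Tr(AB)$ is a nondegenerate symmetric bilinear form on it. Extending $\{\rho_1,\dots,\rho_n\}$ to a basis of this space, the standard dual-basis construction produces Hermitian operators $G_1,\dots,G_n$ with $\Tr(\rho_i G_j)=\delta_{ij}$. I would emphasize here that the $G_j$ are typically not positive, which is precisely why the Spekkens framework in Eq.~\eqref{eq:Spekkens_def} admits pseudo-POVMs rather than POVMs.

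The one genuine subtlety is the pseudo-POVM normalization $\sum_\lambda G_\lambda=I$, which the bare dual basis will not satisfy. The fix is to append an auxiliary outcome $\lambda_0$ with
\begin{equation}
G_{\lambda_0}:=I-\sum_{j=1}^{n}G_j,\qquad \sigma_{\lambda_0}:=I/\dim\mathcal{H}.
\end{equation}
Using $\Tr(\rho_i)=1$ and $\Tr(\rho_i G_j)=\delta_{ij}$, one checks that $\Tr(\rho_i G_{\lambda_0})=1-1=0$, so this extra outcome contributes nothing to either side of Eq.~\eqref{eq:Spekkens_def} and the normalization $\sum_\lambda \Tr(\rho_i G_\lambda)=1$ holds for every $i$.

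What remains is routine verification of the nonnegativity requirements: $\Tr(\rho_i G_\lambda)\in\{0,1\}\ge 0$ by construction, and $\Tr(\sigma_\lambda M_k)\ge 0$ because each $\sigma_\lambda$ is a bona fide density operator and each $M_k$ is positive semidefinite. The main conceptual obstacle is psychological rather than technical---one must resist the temptation to demand that the $G_\lambda$ themselves be positive---while the only technical point requiring care is the addition of the auxiliary outcome $\lambda_0$ to enforce the pseudo-POVM completeness without disturbing the duality relations already established.
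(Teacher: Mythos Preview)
Your proof is correct. The paper itself does not prove Theorem~\ref{th1}; it merely states the result and cites external references. That said, your construction---choosing $\sigma_\lambda=\rho_\lambda$ and building Hermitian $G_\lambda$ dual to the $\rho_i$ via $\Tr(\rho_i G_j)=\delta_{ij}$, then completing to a pseudo-POVM---is precisely the strategy the paper employs in its explicit one-qubit example, where it sets $G_\lambda=F_\lambda$, $\sigma_\lambda=\tau_\lambda$ and solves $\Tr(\tau_i F_j)=\delta_{ij}$ with $F_4=I-F_1-F_2-F_3$. Your auxiliary outcome $\lambda_0$ is the general-$n$ analogue of the paper's $F_4$ (in the paper's example $n=d^2=4$, so the last dual element already closes the pseudo-POVM without a separate null outcome). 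So your argument is both sound and in the same spirit as the paper's own usage of the theorem.
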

The second result is a necessary and sufficient condition for a set of pure states to be (non-)contextual. 
\begin{theorem} \label{th2}
Let $\mathcal{H}$ be a Hilbert space, let $\mathcal{D}(\mathcal{H})$ be the space of states in $\mathcal{H}$, and let $\{\rho_i\}_{i} \subseteq \mathcal{D}(\mathcal{H})$ be a set of pure states. The set $\{\rho_i\}_{i}$ is Spekkens non-contextual if and only if the density matrices of the states $\rho_i$ are linearly independent. 
\end{theorem}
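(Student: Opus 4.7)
The backward direction (linear independence implies non-contextuality) is immediate from Theorem \ref{th1}, since that result applies to arbitrary states. The substantive content of Theorem \ref{th2} is the forward direction: a set of pure non-contextual states must have linearly independent density matrices. I would argue by contrapositive -- assume the pure states $\rho_i=\ket{\psi_i}\bra{\psi_i}$ are distinct and linearly dependent, and derive that no Spekkens model of the form \eqref{eq:Spekkens_def} can exist.

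The first step is to lift the Spekkens condition to an operator identity. Writing $p_{i,\lambda}:=\Tr(\rho_i G_\lambda)\geq 0$, the requirement that $\Tr(\rho_i M_k)=\sum_\lambda p_{i,\lambda}\Tr(\sigma_\lambda M_k)$ hold for \emph{every} POVM element $M_k$, together with the fact that the effects $\{M:0\leq M\leq I\}$ span the Hermitian operators on $\mathcal{H}$, forces
\begin{equation}\label{eq:conv_decomp}
\rho_i=\sum_\lambda p_{i,\lambda}\,\sigma_\lambda,
\end{equation}
with $\sum_\lambda p_{i,\lambda}=\Tr(\rho_i\,I)=1$ coming from $\sum_\lambda G_\lambda=I$.

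Next I would exploit purity. Equation \eqref{eq:conv_decomp} writes the pure state $\rho_i$ as a convex combination of density matrices, but pure states are extreme points of $\mathcal{D}(\mathcal{H})$: for any $\ket{\phi}\perp\ket{\psi_i}$ one has $0=\bra{\phi}\rho_i\ket{\phi}=\sum_\lambda p_{i,\lambda}\bra{\phi}\sigma_\lambda\ket{\phi}$ with every summand non-negative, forcing $\sigma_\lambda=\rho_i$ whenever $p_{i,\lambda}>0$. Distinctness of the $\rho_i$'s then makes the supports $\Lambda_i:=\{\lambda:p_{i,\lambda}>0\}$ pairwise disjoint, since $\lambda\in\Lambda_i\cap\Lambda_j$ would give $\rho_i=\sigma_\lambda=\rho_j$. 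Defining the coarse-grained operators $\widetilde{G}_i:=\sum_{\lambda\in\Lambda_i}G_\lambda$, disjointness and the purity argument yield the trace-dual relation $\Tr(\rho_j \widetilde{G}_i)=\delta_{ij}$. Any dependence $\sum_i c_i\rho_i=0$ then collapses, upon pairing with $\widetilde{G}_j$, to $c_j=0$ for every $j$, contradicting the assumed linear dependence.

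The main subtlety is the first step -- correctly promoting a condition stated only against POVM effects to a full operator identity -- which relies on the spanning property of effects in the Hermitian operators. The extremality-of-pure-states argument is classical, and once \eqref{eq:conv_decomp} is in hand the rest is a clean dual-basis computation, so I do not anticipate further obstacles.
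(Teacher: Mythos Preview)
The paper does not supply its own proof of Theorem~\ref{th2}; it is quoted as a known result with citations to \cite{Zhang:2025kme, CamposDelgado:2025lyl}, so there is no in-paper argument to compare against.

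Your argument is correct. The promotion of \eqref{eq:Spekkens_def} to the operator identity \eqref{eq:conv_decomp} is legitimate because effects span the Hermitian operators on a finite-dimensional $\mathcal{H}$; the extremality step (any $\sigma_\lambda$ with $p_{i,\lambda}>0$ must equal $\rho_i$) is the standard fact that pure states admit only trivial convex decompositions; and disjointness of the $\Lambda_i$ then gives the dual system $\Tr(\rho_j\widetilde G_i)=\delta_{ij}$, from which linear independence follows immediately. One cosmetic point: you announce a proof by contrapositive but what you actually write is the direct implication ``Spekkens model exists $\Rightarrow$ $\{\rho_i\}$ linearly independent,'' wrapped in an unnecessary contradiction; dropping the contrapositive framing would make the logic cleaner without changing any step.
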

Notice that the transformation \eqref{eq:relativistic_rule} is linear and invertible. Therefore, it preserves the linear (in)dependence of states.  From the previous theorems, we conclude that Spekkens contextuality is Lorentz invariant when both spin and momentum are taken into account. This is no longer true when one considers spin states only. To illustrate this, we construct an explicit example for one-qubit states. Formally, we start with a total Hilbert space $\mathcal{H}=\mathcal{H}_{\text{spin}}\otimes\mathcal{H}_{\text{momentum}}$. After tracing out the momentum degrees of freedom, we are effectively working only with $\mathcal{H}_{\text{spin}}$. Accordingly, in the definition \eqref{eq:Spekkens_def} as well as in Theorems \ref{th1} and \ref{th2}, we are considering all spin measurements. From now on, whenever we say that a set of states is (non-)contextual, we will always implicitly assume that all spin measurements are considered.  We denote the elements of the computational basis as $\ket{\uparrow}, \ket{\downarrow}$. 
Let us consider in an inertial frame the state
\begin{equation}\label{eq:state1}
\ket{\psi_1}=\int d\mu(p)\, \psi_{\uparrow}(p)\ket{\uparrow,p}
\end{equation}
with the Lorentz invariant measure \cite{Peres:2002wx}
\begin{equation}
    d\mu(p)=\frac{d^3\textbf{p}}{(2\pi)^32E}
\end{equation}
and normalization
\begin{equation}
\langle p|p'\rangle=(2\pi)^3 2E\delta(\textbf{p}-\textbf{p}').
\end{equation}
After tracing out the momentum degrees of freedom, we get the reduced density matrix
\begin{equation}
\rho_1=\Tr_p\ket{\psi_1}\bra{\psi_1}=\int d\mu(p)\, |\psi_{\uparrow}(p)|^2 \ket{\uparrow}\bra{\uparrow}=\ket{\uparrow}\bra{\uparrow}.
\end{equation}
Let us perform a Lorentz boost along the $z$ axis and compute the state in the new inertial frame by applying Eq. \eqref{eq:relativistic_rule}:
\begin{equation}
\begin{gathered}
\ket{\psi'_1}=\int d\mu(p)\,\psi_{\uparrow}(p)U(\Lambda)\ket{\uparrow,p}\\
=\sum_{\sigma'}\int d\mu(p) \,\psi_{\uparrow}(p)D_{\sigma'\uparrow}\left[W(\Lambda,p)\right]\ket{\sigma',\Lambda p}\\
= \sum_{\sigma'} \int d\mu(p)\, \psi_{\uparrow}\left(\Lambda^{-1}p \right)D_{\sigma' \uparrow}\left[W\left(\Lambda,\Lambda^{-1}p\right)\right]\ket{\sigma',p}.
\end{gathered}
\end{equation}
The corresponding reduced density matrix is 
\begin{equation}
\begin{gathered}
    \tau_1=\Tr_p{\ket{\psi'_1}\bra{\psi'_1}}=\int d\mu\, |\psi_{\uparrow}|^2\Big[D_{\uparrow\uparrow}D^*_{\uparrow\uparrow}\ket{\uparrow}\bra{\uparrow}\\+D_{\uparrow\uparrow}D^*_{\downarrow\uparrow}\ket{\uparrow}\bra{\downarrow}
    +D_{\downarrow \uparrow}D^*_{\uparrow\uparrow}\ket{\downarrow}\bra{\uparrow}+D_{\downarrow\uparrow}D^*_{\downarrow\uparrow}\ket{\downarrow}\bra{\downarrow}\Big].
\end{gathered}
\end{equation}
We can replicate the same calculation starting with the states 
\begin{equation}\label{eq:states234}
\begin{gathered}
    \ket{\psi_2}=\int d\mu(p)\, \psi_{\downarrow}(p)\ket{\downarrow,p}, \\
    \ket{\psi_3} = \int d\mu(p)\, \psi_{+}(p)\ket{+,p},\\
    \ket{\psi_4}= \int d\mu(p)\, \psi_{-}(p)\ket{-,p},
\end{gathered}
\end{equation}
where $\ket{\pm,p}=\frac{\ket{\uparrow,p}\pm\ket{\downarrow,p}}{\sqrt{2}}$.
After tracing out the momenta, the reduced density matrices are $\rho_1=\ket{\uparrow}\bra{\uparrow}$, $\rho_2=\ket{\downarrow}\bra{\downarrow},$ $\rho_3=\ket{+}\bra{+}$, $\rho_4=\ket{-}\bra{-}$, which are pure and linearly dependent, hence contextual by Theorem \ref{th2} (with respect to all possible spin measurements). 
It is convenient to define the following integrals:
\begin{equation}\label{eq: integrals_onequbit}
\begin{gathered}
I_{\updownarrows,1}=\int d\mu\,|\psi_{\updownarrows}|^2 D^2_{\uparrow\uparrow}, \hspace{2mm}I_{\updownarrows,2}=\int d\mu\,|\psi_{\updownarrows}|^2 |D_{\downarrow\uparrow}|^2\\
I_{\updownarrows,3}=\int d\mu\,|\psi_{\updownarrows}|^2 D_{\downarrow\uparrow}D_{\uparrow\uparrow},\\
I_{\pm,1}=\int d\mu\, |\psi_{\pm}|^2 D^2_{\uparrow\uparrow}, \hspace{2mm} I_{\pm,2}=\int d\mu\, |\psi_{\pm}|^2 |D_{\downarrow\uparrow}|^2, \\
I_{\pm,3}=\int d\mu\, |\psi_{\pm}|^2 D^2_{\downarrow\uparrow}, \hspace{2mm} I_{\pm,4}=\int d\mu\, |\psi_{\pm}|^2 D_{\downarrow\uparrow}D_{\uparrow\uparrow}. 
\end{gathered}
\end{equation}
The transformed spin states after the Lorentz boost can be simplified by noticing that $D_{\uparrow\uparrow}=D_{\downarrow\downarrow}=D^*_{\uparrow\uparrow}$, $D_{\uparrow\downarrow}=-D^*_{\downarrow \uparrow}$ (see Eq. \eqref{eq:wigner_matrix}), and in matrix form they are given by
\begin{equation}\label{eq:transformed_states}
\begin{gathered}
\tau_1 =
\begin{pmatrix}
I_{\uparrow,1} & I^*_{\uparrow,3} \\
I_{\uparrow,3} & I_{\uparrow,2}
\end{pmatrix}, \quad
\tau_2 =
\begin{pmatrix}
I_{\downarrow,2} & -I^*_{\downarrow,3} \\
- I_{\downarrow,3} & I_{\downarrow,1}
\end{pmatrix}, \vspace{3mm}\\
\tau_3 = \frac{1}{2}
\begin{pmatrix}
- I^*_{+,4}-I_{+,4}+1 & -I^*_{+,3}+I_{+,1} \\
- I_{+,3}+I_{+,1} & I^*_{+,4}+I_{+,4}+1
\end{pmatrix}, \vspace{3mm}\\
\tau_4 = \frac{1}{2}
\begin{pmatrix}
I^*_{-,4}+I_{-,4}+1&  I^*_{-,3}-I_{-,1}\\
 I_{-,3}-I_{-,1}& -I^*_{-,4}-I_{-,4}+1
\end{pmatrix}.
\end{gathered}
\end{equation}
Notice that $\Tr(\tau_i)=1$, as $I_{\uparrow,1}+I_{\uparrow2}=1$, $I_{\downarrow,1}+I_{\downarrow,2}=1$, $I_{\pm,1}+I_{\pm,2}=1$.  In summary, the contextual set of pure,
linearly dependent states $\{\rho_1, \rho_2, \rho_3, \rho_4\}$ is mapped to the set of mixed, linearly independent states $\{\tau_1,\tau_2,\tau_3,\tau_4\}$.
Also note that the initial set has lost information about the form of the original wavefunctions. This information is present in the transformed set. Moreover, we stress the fact that the representation \eqref{eq:wigner_matrix} holds for massive particle states only. For massless particles like photons the Wigner matrix is diagonal \cite{Weinberg:1995mt} with the consequence that some of the integrals in Eq. \eqref{eq: integrals_onequbit} vanish. In other words, Spekkens' contextuality is a Lorentz invariant notion for photonic one-qubit states.
We now show that there exists a pseudo-POVM $F=\{F_j\}_j$ such that $\Tr(\tau_i F_{j})=\delta_{ij}$. 
The general form of a one-qubit linear operator is
\begin{equation}
    F = A\ket{\uparrow}\bra{\uparrow}+B\ket{\uparrow}\bra{\downarrow}+C\ket{\downarrow}\bra{\uparrow}+D\ket{\downarrow}\bra{\downarrow}
\end{equation}
The condition $F=F^{\dagger}$ fixes $A,D\in\mathbb{R}$ and $B=C^*$. Thus, $F(a,b,c,d)=a\ket{\uparrow}\bra{\uparrow}+(b+ic)\ket{\uparrow}\bra{\downarrow}+(b-ic)\ket{\downarrow}\bra{\uparrow}+d\ket{\downarrow}\bra{\downarrow}$. The pseudo-POVM is then composed of $F_1=F(a,b,c,d)$, $F_2=F(a',b',c',d')$, $F_3=F(a'',b'',c'',d'')$ and $F_4=I-F_1-F_2-F_3$, for a total of 12 free real parameters. The number of independent equations is also 12, suggesting that a solution may exist. We verified this by explicitly solving the equations in Mathematica and indeed obtained a solution. The resulting coefficients for the POVM elements are quite lengthy expressions.The corresponding Mathematica notebook is available at the GitHub repository~\cite{RepoKey} or alternatively upon request.

In summary, the model for spin observables in the rest frame is contextual, while the model for spin observables in the boosted frame is not. Note that the two models are different, as they pertain to different sets of observables. 
Physically, the situation can be understood as follows. Alice, in her rest frame, performs arbitrary spin measurements on her spin states and finds that Spekkens’ definition \eqref{eq:Spekkens_def} is not satisfied for any choice of measurement. In other words, her ontological model with spin states $\{\rho_1,\rho_2,\rho_3,\rho_4\}$ and all spin measurements $\{M_{\text{spin}}\}$ is contextual. Bob, on the other hand, applies a Lorentz boost and observes the same physical states, which however appear different in his frame. He finds that, for every possible spin measurement in his frame, the definition is always satisfied. We emphasize that it is not necessary to know the explicit transformation of measurements under a Lorentz boost, since Bob can perform any measurement, including those unrelated to Alice’s.  Indeed, by choosing $G_{\lambda}=F_\lambda$ and $\sigma_\lambda=\tau_\lambda$, $\lambda=1,\cdots,4$, we have
\begin{equation}
\begin{gathered}
    \Tr(\tau_i M_k)=\sum_{\lambda}\Tr(\tau_i G_{\lambda})\Tr(\sigma_{\lambda} M_k)\\=\sum_{\lambda}\Tr(\tau_i F_{\lambda})\Tr(\tau_{\lambda} M_k)=\sum_{\lambda}\delta_{i\lambda}\Tr(\tau_\lambda M_k)\\=\Tr(\tau_i M_k) \hspace{3mm} \forall M_k
\end{gathered}
\end{equation}
In other words, Bob's ontological model with spin states $\{\tau_1,\tau_2,\tau_3,\tau_4\}$ and all transformed spin measurements $\{M'_{\text{spin}}\}$ is non-contextual. 

One can check that, for example, $\lim_{\zeta\to 0}a=\infty$. On the one hand, the divergence of the coefficient recovers the trivial case with no boost, where the states are contextual. On the other hand, it is always possible to have a solution for an arbitrary small $\zeta$, i.e. the final states are non-contextual even in the non-relativistic regime. 

Nevertheless, adding the additional constraint of spherical symmetry of the wavefunctions does make Spekkens' contextuality Lorentz invariant for one-qubit states, as shown by the following theorem. 
\begin{theorem}\label{th3}
If the one-particle wavefunctions in momentum space are spherically symmetric, then Spekkens' contextuality is Lorentz invariant for one-qubit states.
\end{theorem}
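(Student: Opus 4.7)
The plan is to reduce the boost-plus-trace-out operation to a single invertible one-qubit channel $\Phi$ under the spherical-symmetry assumption, and then to combine Theorems~\ref{th1} and~\ref{th2} with the invertibility of $\Phi$ to transfer (non-)contextuality between the pre- and post-boost reduced spin states.

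First, I would redo the calculation behind Eqs.~(\ref{eq: integrals_onequbit})--(\ref{eq:transformed_states}) under the assumption that every wavefunction is of the form $\psi(p) = \phi(|p|)$. The off-diagonal Wigner entries $D_{\downarrow\uparrow} = -\beta e^{i\varphi}$ and $D_{\uparrow\downarrow} = \beta e^{-i\varphi}$ from Eq.~\eqref{eq:wigner_matrix} carry phases in $\varphi$, so the integrals $I_{\updownarrows,3},\,I_{\pm,3},\,I_{\pm,4}$ contain a single or double factor $e^{\pm i\varphi}$ and drop out after the angular integration $\int_0^{2\pi} d\varphi$. What remains sends any pre-boost reduced one-qubit state $\rho = \tfrac{1}{2}(I + \vec r \cdot \vec\sigma)$ to $\tau = \Phi(\rho)$, whose Bloch vector is $(A r_x,\, A r_y,\, (A-B) r_z)$ with $A = \int d\mu\,|\phi|^2 \alpha^2$ and $B = \int d\mu\,|\phi|^2 \beta^2$ satisfying $A + B = 1$; equivalently, $\Phi$ is the Pauli channel $\Phi(\rho) = (1-B)\rho + (B/2) X\rho X + (B/2) Y \rho Y$.

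The key structural property is that $\Phi$ is a linear \emph{bijection} on the space of $2\times 2$ Hermitian matrices: its Bloch-space Jacobian $\mathrm{diag}(A, A, A-B)$ is non-singular because $\alpha > 0$ pointwise (hence $A > 0$) and $\beta^2 < 1/2$ pointwise (hence $A - B = 1 - 2B > 0$), both verifiable directly from the explicit forms of $\alpha$ and $\beta$ in the excerpt. Invertibility implies that the linear (in)dependencies of any set of density matrices are preserved: $\{\rho_i\}$ is linearly independent iff $\{\tau_i = \Phi(\rho_i)\}$ is.

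Finally, I would apply Theorems~\ref{th1} and~\ref{th2}. Since the reduced pre-boost spin states are pure (the full state factorizes as spin tensored with a common spherically symmetric momentum wavefunction), Theorem~\ref{th2} equates their non-contextuality with their linear independence, $\Phi$ transports this equivalence to linear independence of $\{\tau_i\}$, and Theorem~\ref{th1} promotes the latter to non-contextuality of $\{\tau_i\}$. The converse direction---that contextual spin states remain contextual after the boost---is the main obstacle, because Theorem~\ref{th2} does not extend to the mixed post-boost states and linear dependence alone is not enough to force contextuality. I would close this direction by adjoint-channel pullback: a hypothetical non-contextual model $(\sigma_\lambda, G_\lambda)$ for $\{\tau_i\}$ induces a candidate $(\Phi^{-1}(\sigma_\lambda), \Phi^*(G_\lambda))$ for $\{\rho_i\}$, with the pseudo-POVM condition and weight non-negativity automatic from unitality of $\Phi^*$ and the identity $\Tr(\rho_i \Phi^*(G_\lambda)) = \Tr(\tau_i G_\lambda)$; the sticking point is positivity of $\Phi^{-1}(\sigma_\lambda)$, which I expect to be forceable from the explicit Pauli structure of $\Phi$ together with the decomposition constraint $\tau_i = \sum_\lambda \Tr(\tau_i G_\lambda)\sigma_\lambda$. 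This positivity step is where most of the real technical work sits.
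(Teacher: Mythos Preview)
Your channel approach is structurally different from the paper's entry-by-entry argument and, where it applies, cleaner. But it carries a restriction that is not in the theorem: you assume every state shares a single momentum wavefunction $\phi$, so that $A,B$ are fixed scalars and the boost-plus-trace becomes one linear map $\Phi$ on $2\times 2$ Hermitian matrices. The theorem statement and the paper's proof (Eqs.~\eqref{eq:th_rho}--\eqref{eq:th_tau}) allow each state---indeed each spin component within a state---to carry its own spherically symmetric $\psi_{\uparrow,i},\psi_{\downarrow,i}$. In that generality the assignment $\rho\mapsto\tau$ is not even a function of $\rho$ alone (two full states with identical reduced spin part but different momentum profiles boost to different $\tau$'s), so there is no single $\Phi$ to invert and your linear-independence transfer collapses. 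What you have written is a proof of a strict special case of the stated result.

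On the converse direction you correctly flag the real obstacle: Theorem~\ref{th2} needs purity and the post-boost states are mixed, while Theorem~\ref{th1} is only one implication. It is worth knowing that the paper's own proof has exactly this gap---it passes from ``linearly independent iff linearly independent'' to ``non-contextual iff non-contextual'' citing only Theorem~\ref{th1}, which does not license the biconditional. Your adjoint-pullback idea is a reasonable attack in the common-$\phi$ setting, but positivity of $\Phi^{-1}(\sigma_\lambda)$ can genuinely fail for a Pauli channel with $B>0$, and nothing in the decomposition $\tau_i=\sum_\lambda\Tr(\tau_i G_\lambda)\sigma_\lambda$ by itself forces the $\sigma_\lambda$ into the sub-ellipsoid where $\Phi^{-1}$ remains positive. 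So that step is not closed in your proposal---nor, in fact, in the paper's argument.
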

\begin{proof}
A generic one-particle mixed state can be written as $\rho=\sum_i p_i \ket{\psi}_i\bra{\psi}_i$, where
\begin{equation}
\ket{\psi}_i = \alpha_i \int d\mu(p)\, \psi_{\uparrow,i}(p)\ket{\uparrow,p}+\beta_i\int d\mu(p)\, \psi_{\downarrow,i}(p)\ket{\downarrow,p} 
\end{equation}
and $\sum_i p_i=1$, $|\alpha_i|^2+|\beta_i|^2=1$ for all $i$ and $\psi_{\uparrow}(p)=\psi_{\uparrow}(|\textbf{p}|)$, $\psi_{\downarrow}(p)=\psi_{\downarrow}(|\textbf{p}|)$.
After tracing out the momentum, the resulting spin state, in matrix form, is
\begin{equation}\label{eq:th_rho}
    \rho = \sum_i p_i\begin{pmatrix}
        |\alpha_i|^2\int d\mu\,|\psi_{\uparrow,i}^2| & \alpha_i \beta^*_i\int d\mu\, \psi_{\uparrow,i}\psi^*_{\downarrow,i}\\
        \alpha^*_i \beta_i\int d\mu\, \psi^*_{\uparrow_i}\psi_{\downarrow_i}&|\beta_i|^2\int d\mu\, |\psi_{\downarrow_i}|^2
    \end{pmatrix}.
\end{equation}
Let us perform a Lorentz boost. Since the wavefunctions are spherically symmetric, they are independent of the angle $\varphi$ so integrals involving $D_{\downarrow\uparrow}$ or $D^2_{\downarrow\uparrow}$ vanish in virtue of $\int_{0}^{2\pi}d\varphi\,e^{i\varphi}=0$. Hence, using again the notation in \eqref{eq: integrals_onequbit}, the resulting final spin state is $\tau=\sum_i p_i \Tilde{\tau}_i$, with
\begin{small}
\begin{equation}\label{eq:th_tau}
    \tilde{\tau}_i = \begin{pmatrix}
    |\alpha_i|^2 {I_{\uparrow,1}}_i+|\beta|^2{I_{\downarrow,2}}_{i}&\alpha_i \beta_i \int d\mu\psi_{\uparrow,i}\psi^*_{\downarrow,i} D^2_{\uparrow\uparrow}\\
    \alpha_i \beta^*_i\int d\mu \psi^*_{\uparrow,i}\psi_{\downarrow,i}D^2_{\uparrow\uparrow} & |\alpha_i|^2{I_{\uparrow,2}}_i+|\beta_i|^2 {I_{\downarrow,1}}_i
    \end{pmatrix}.
\end{equation}
\end{small}
Let us now consider linearly independent states $\{\rho_j\}_j$. Linear independence means that $\sum_j \lambda_j\rho_j=0\to \lambda_j=0 \,\quad\forall j$. In particular, the off-diagonal elements of \eqref{eq:th_rho} give the condition 
\begin{equation}\label{eq:condition}
    \sum_{i}\sum_j{\lambda_{i}}_{j} {\alpha_{i}}_{j} {\beta^*_{i}}_{j} =0 \to {\lambda_i}_j = 0\quad \forall j,
\end{equation}
where ${\lambda_{i}}_{j}=\lambda_j \int d\mu {\psi_{\uparrow,i}}_{j}{\psi^*_{\downarrow,i}}_{j}$. Let us now turn to $\{\tau_j\}_j$. While checking the linear independence or dependence, the resulting equation from the diagonal elements of \eqref{eq:th_tau} is $\sum_i\sum_j\omega_j {\alpha_i}_j{\beta_i}_j\int d\mu {\psi_{\uparrow,i}}_j{\psi^*_{\downarrow,i}}_jD_{\uparrow\uparrow}=0$ or $\sum_{i}\sum_j{\omega_{i}}_{j} {\alpha_{i}}_{j} {\beta^*_{i}}_{j} =0$ where ${\omega_{i}}_{j}=\omega_j \int d\mu {\psi_{\uparrow,i}}_{j}{\psi^*_{\downarrow,i}}_{j}D_{\uparrow\uparrow}$. 
The form of the last condition is the same as the one in Eq. \eqref{eq:condition}. Thus, spin states in one frame are linearly independent if and only if the transformed spin states are linearly independent. By Theorem \ref{th1}, this means that spin states in one frame are non-contextual if and only if they are non-contextual in the other frame. In turn, this implies that spin  states in one frame are contextual if and only if they are contextual in the other frame. 
\end{proof}
The proof of Theorem \ref{th3} suggests that the Lorentz invariance of Spekkens' contextuality for one-qubit states is achieved thanks to the cancellation of unwanted terms. However,  as the dimension of the Hilbert space grows, the matrices representing states with two or more qubits also acquire additional terms and linear independence in one frame does not guarantee linear independence in the other, even for spherically symmetric wavefunctions. 

\paragraph{A relativistic effect in state discrimination}
We consider an idealized version of the BB84 protocol. Alice prepares the four massive states given in Eqs. \eqref{eq:state1}, \eqref{eq:states234} (with non-spherically symmetric wavefunctions), she ignores the momenta degrees of freedom and, with a probability distribution $(p_i)_i$, sends the pure states $\{\rho_1,\rho_2,\rho_3,\rho_4\}=\{\ket{\uparrow}\bra{\uparrow}, \ket{\downarrow}\bra{\downarrow},\ket{+}\bra{+},\ket{-}\bra{-}\}$ to Bob. This situation is robust against the attacks of an eavesdropper Eve. However, Eve knows that a contextual model can be mapped to a non-contextual one and wonders if non-contextuality somehow allows her to break the protocol. She performs a Lorentz boost and in her frame she now observes the mixed states $\{\tau_1,\tau_2,\tau_3,\tau_4\}$ as in Eq. \eqref{eq:transformed_states}. The average probability of success for a given generalised measurement $M=\{M_k\}_k$ to distinguish which state was sent is 
\begin{equation}
    P_{\text{success}}=\sum_i p_i \Tr(\tau_i M_i).
\end{equation}
The optimal measurement is then the solution of the semidefinite program
\begin{equation}
\begin{split}
    &\text{max} \,\,\,\sum_i p_i \Tr(\tau_i M_i)\\
    & \text{s.t.}\hspace{4mm} M_i \succeq 0, \,\sum_i M_i = I.
\end{split}
\end{equation}
The result depends on the explicit form of the wavefunctions $\psi_{\updownarrows}, \psi_{\pm}$. Let us consider the example of a deformed Gaussian:
\begin{equation}
    \psi_{\updownarrows,\pm}=N_{\updownarrows,\pm}\exp\left(-\frac{p^2}{2\sigma_{\updownarrows, \pm}^2}\right)\sqrt{1+\epsilon \cos{\phi}}.
\end{equation}
\begin{figure}
\centering
\includegraphics[width=0.9\linewidth]{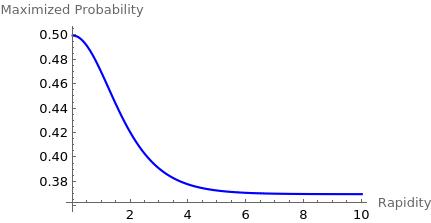}
\caption{Eve's probability of success in discriminating between four states, as a function of the rapidity of the Lorentz boost. Data is shown for $\epsilon=0.1$, $m=1$, $\sigma_{\uparrow}=2$, $\sigma_{\downarrow}=4$, $\sigma_{+}=3$, $\sigma_{-}=6$.  }
\label{fig: fig1}
\end{figure}
The solution of the semidefinite program for the uniform distribution $p_i=1/4$ and the specific choice $\epsilon=0.1$, $m=1$, $\sigma_{\uparrow}=2$, $\sigma_{\downarrow}=4$, $\sigma_{+}=3$, $\sigma_{-}=6$, is shown in Fig. \ref{fig: fig1}. 
We conclude that performing a boost actually worsens the probability of success. The reason behind this can be traced back to the fact that the initial states are pure, while the transformed states are mixed. Thus, the loss of contextuality does not automatically imply an advantage in discriminating between states. However, there is a task in which a moving observer can outperform one at rest.
If we define $\tilde{\rho}_1=\frac{1}{2}(\rho_1+\rho_2)$ and $\tilde{\rho}_2=\frac{1}{2}(\rho_3+\rho_4)$, the probability of successfully discriminating between the two ensembles (i.e. corresponding to the least probability of error) is given by the Helstrom formula \cite{Helstrom1976}:
\begin{equation}
    P_{\text{Helstrom}}=\frac{1}{2}+\frac{1}{4}||\tilde{\rho}_1-\tilde{\rho}_2||_1,
\end{equation}
where $||A||_1=\Tr\left(\sqrt{A^\dagger A}\right)$.
Since $\tilde{\rho}_1=\tilde{\rho}_2$, the probability of success for an eavesdropper at rest is $1/2$. However, the transformed states are different, $\tilde{\tau}_1=\frac{1}{2}(\tau_1+\tau_2) \neq \tilde{\tau}_2=\frac{1}{2}(\tau_3+\tau_4)$ with the consequence that a moving eavesdropper experiences a slightly higher probability, as shown in Fig. \ref{fig: fig2}. 
\begin{figure}
\centering
\includegraphics[width=0.9\linewidth]{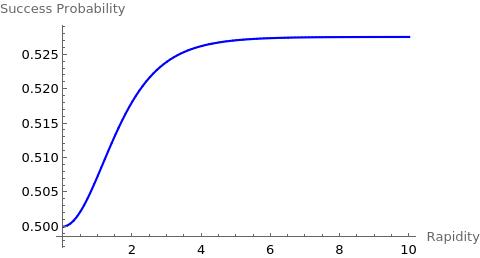}
\caption{Eve's probability of success in discriminating between two ensembles, as a function of the rapidity of the Lorentz boost. Data is shown for $\epsilon=0.1$, $m=1$, $\sigma_{\uparrow}=2$, $\sigma_{\downarrow}=4$, $\sigma_{+}=3$, $\sigma_{-}=6$.  }
\label{fig: fig2}
\end{figure}
In summary, even though performing a Lorentz boost does not guarantee a breaking of quantum key distribution protocols, it does provide an (albeit slight) advantage in the general task of discriminating between two quantum states by means of a Helstrom measurement. 

There is a caveat in our discussion. We have assumed the eavesdropper Eve to have access only to the spin degrees of freedom.  However, in general
cryptographic scenarios one does not restrict the capabilities of eavesdropper. Typically, they are assumed to have access to all
available degrees of freedom, in which case the 
probability of success would not depend on the rapidity. 

\paragraph{Conclusions}
We have shown that, in general, Spekkens’ contextuality restricted to spin states is frame-dependent. We constructed an explicit example for one-qubit states, where we demonstrated that the model for spin observables in the rest frame is contextual, while
the model for spin observables in the boosted frame is not. Importantly, this effect is not limited to relativistic regimes, as the change in contextuality persists even at low speeds. This indicates that, for a truly frame-independent notion of Spekkens’ contextuality, quantum states must include both spin and momentum degrees of freedom.
Nevertheless, experiments restricted to spin degrees of freedom remain meaningful: for instance, observing changes in contextuality can still be done without concern for the inertial frame, as discussed in \cite{CamposDelgado:2025lyl}. Moreover, we showed that imposing spherical symmetry on the momentum-space wavefunction restores Lorentz invariance for one-qubit states, though this does not generalise to multi-qubit systems.
An interesting open question is whether the Spekkens’ framework itself can be generalised to be fully Lorentz invariant. Addressing this question may require techniques from quantum field theory or a careful reconsideration of operational equivalence in relativistic settings.
We also investigated a simple BB84-type scenario in which an eavesdropper performs a Lorentz boost. While boosting generally reduces the success probability in distinguishing four states, we identified a task—discriminating between two ensembles—where a moving observer actually gains an advantage. It would be interesting to identify which quantum information tasks can benefit from relative motion, and to explore how these effects manifest in multi-qubit systems or continuous-variable settings. In conclusion, our work highlights the need to carefully reconsider operational notions, like contextuality, in relativistic settings, potentially opening new avenues for both theory and experiment at the intersection of quantum information and relativity.

\begin{acknowledgments}
\paragraph{Acknowledgments}
The author thanks Martin Plávala, Robert Raussendorf, René Schwonnek and Henrik Wilming for useful discussions. 
\end{acknowledgments}

\bibliography{citations}

\onecolumngrid

\end{document}